\def\expect{\mathbb{E}}
\newcommand{\BigTheta}{\mathrm{\Theta}}
\newcommand{\BigOmega}{\mathrm{\Omega}}
\newcommand{\Po}{\mathcal{P}}
\def\final{1}  
\def\iflong{1}
\newcommand{\onote}[1]{[{\tiny neil: \bf #1}]\marginpar{*}}
\newcommand{\bnote}[1]{[{\tiny bruce: \bf #1}]\marginpar{*}}
\newcommand{\gnote}[1]{[{\tiny navin: \bf #1}]\marginpar{*}}
\newcommand{\sidecomment}[1]{\marginpar{\tiny #1}}
\newcommand{\onote}[1]{}
\newcommand{\bnote}[1]{}
\newcommand{\gnote}[1]{}
\newcommand{\sidecomment}[1]{}
\newcommand{\optdyn}{\ensuremath{\text{\sc opt}_{FR}}}
\newcommand{\optmpr}{\ensuremath{\text{\sc opt}_{MPR}}}
\newcommand{\optspr}{\ensuremath{\text{\sc opt}_{SPR}}}
\newcommand{\optT}{\ensuremath{\text{\sc opt}_{TR}}}
\newcommand{\flow}[1]{f_{{#1}}}
\newcommand{\BaR}{{\sc bar}}
\newcommand{\SPR}{{\sc spr}}
\newcommand{\MPR}{{\sc mpr}}
\newcommand{\FR}{{\sc fr}}
\newcommand{\TR}{{\sc tr}}
\newcommand{\ssrob}{{\sc ssrob}}
\DeclareMathOperator{\cost}{cost}
\DeclareMathOperator{\supp}{supp}
\begin{document}

\title{Dynamic vs Oblivious Routing in Network Design}
\newcommand{\affaddr}[1]{#1}
\newcommand{\alignauthor}{}
\newcommand{\mcgilladdress}{
    \affaddr{Department of Mathematics and Statistics}\\
    \affaddr{McGill University, Montreal, Canada}\\
}

\author{Navin Goyal\inst{1} \and Neil Olver\inst{2} \and F. Bruce Shepherd\inst{3}}
\institute{Microsoft Reseach India\\ Bangalore, India\\ \email{navingo@microsoft.com\vspace{1ex}}
\and \mcgilladdress \email{olver@math.mcgill.ca \vspace{1ex}}
\and \mcgilladdress \email{bruce.shepherd@mcgill.ca}
}

\date{}
\maketitle

\begin{abstract}
Consider the robust network design problem of finding a minimum cost network with enough capacity to route all traffic demand matrices in a given polytope.
We investigate the impact of different routing models in this robust setting: in particular, we compare \emph{oblivious} routing, where the routing between each terminal pair must be fixed in advance,
to \emph{dynamic} routing, where routings may depend arbitrarily on the current demand.
Our main result is a construction that shows that the optimal cost of such a network
based on oblivious routing (fractional or integral) may be a factor
of $\BigOmega(\log{n})$ more than the cost required when using
dynamic routing. This is true even in the important special case of
the asymmetric hose model. This answers a question in
\cite{chekurisurvey07}, and is tight up to constant factors.  Our
proof technique builds on a connection between expander graphs and
robust design for single-sink traffic patterns~\cite{ChekuriHardness07}.
\end{abstract}

\section{Introduction}
One of the most widely studied applications of robustness in
discrete optimization has been in the context  of network design.
This is partly motivated by the fact that traffic demands in modern
data networks are often hard to determine and/or are rapidly
changing. In one general model (cf. \cite{benameur03}), the input
consists of a graph (network topology) where each edge comes with a
cost to reserve capacity.
In addition, a universe of possible demand matrices
is specified as a polyhedron $\mathcal{P}$ (or more generally, as a convex
body).  In this paper our focus
is on undirected demands and so for a demand matrix $D$, the entries
$D_{ij}$ and $D_{ji}$ normally represent the same demand, and are
hence equal.  The problem is to design a minimum cost network such
that each demand matrix in the polytope can be routed (according to
routing models we describe shortly) in the resulting capacitated
network. Typically we seek to install edge capacities so that the
sum of costs is minimized, but other cost measures such as
minimizing the maximum congestion are also considered in the
literature.  We refer to the recent survey by Chekuri \cite{chekurisurvey07}
for a discussion of these models and previous work.

Since demands are potentially changing, there are two prime natural
routing models that are considered. The first is \emph{dynamic routing}: for any given demand $D \in \mathcal{P}$, we may use a traffic routing tailored to this demand.
We consider only the case where the routing may be an arbitrary multicommodity flow, i.e. traffic flows may be fractional.
We also refer to this routing model as \FR.
Dynamic routing, out of all possible routing models, clearly leads to the cheapest possible solution.
However, this model is typically considered impractical.


On the other extreme, \emph{oblivious routing} models, inspired by routing in packet networks, ask for a \emph{routing template} that defines ahead of
time how any future demands will be routed. For each node pair
$i,j$, the template $f$ specifies a unit network flow $\flow{ij}$
between $i$ and $j$. The interpretation is that if there is a future
demand of $D_{ij}$ between nodes $i,j$, then along each $ij$ path
$P$, we should route $D_{ij} \flow{ij}(P)$ flow on this path.
This scheme is obviously much simpler than dynamic routing, and has the advantage that routings are stable, which can be important in maintaining Quality of Service guarantees.

Flow templates may be either fractional, in which case they are called
\emph{multipath routings} (\MPR), or integral, in which case they are called \emph{single-path routings} (\SPR). We also discuss a special case of \SPR\ templates
called \emph{tree} templates where the support of $f$ induces a tree
in the network; we refer to this model as \TR.  We can now formally
define the robust network design problem (cf.
\cite{ChekuriHardness07}):

\begin{definition}
Given a graph $G=(V,E)$ with $|V| = n$, edge costs $c: E \rightarrow \mathbb{R}^+$, a polytope
$\mathcal{P}$ of demand matrices, and a routing model (\FR, \SPR, \MPR, \TR), the robust network
design problem is defined as follows.  Find a minimum cost capacity installation
of edge capacities $u: E \rightarrow \mathbb{R}^+$ so that all demand matrices in $\mathcal{P}$
can be routed in the given routing model.  The cost of capacity installation $u$ is given
by $\sum_{e \in E} u(e) c(e)$.
\end{definition}

For a given instance of robust network design $(G, c, \mathcal{P})$, we use
$\optdyn(G, c, \mathcal{P})$, $\optmpr(G, c, \mathcal{P})$,
$\optspr(G, c, \mathcal{P})$ and $\optT(G, c, \mathcal{P})$ to denote the corresponding cost of
an optimally designed robust network for the four routing models. If
the context is clear, we may simply write, for instance, $\optdyn$.

Obviously we have
\begin{equation}\label{eq:gaps}
	\optdyn \leq \optmpr \leq \optspr \leq \optT.
\end{equation}
It was already known that the gap between $\optdyn$ and $\optspr$ is $O(\log n)$ (credited to A. Gupta, cf. \cite{chekurisurvey07}).
This follows by an application of the approximation of
arbitrary metrics by tree metrics~\cite{FRT04}.  One can further show, by similar arguments
but now using a theorem of \cite{AbrahamBN08} instead, that the gap between $\optdyn$ and $\optT$ is
at most $\tilde{O}(\log{n})$, where $\tilde{O}$ hides an $O(\mathrm{poly}\log\log{n})$ factor.

\paragraph{Our Results.} In this paper, we seek to understand to what extent these gaps are realizable; in
other words, for any pair of routing methods, what is the maximum possible gap between the costs of their
optimal solution?

In short, the answer is that except for the pair $\{\optmpr, \optspr\}$, the gap between any pair in~\eqref{eq:gaps} can be as large as $\BigOmega(\log n)$; this is essentially tight.
The exception, the gap between $\optmpr$ and $\optspr$, is at least polylogarithmically large ($\BigOmega(\log^{1/4-\epsilon}(n))$ for any $\epsilon > 0$).
This follows indirectly from an approximation-preserving reduction \cite{OlverS10} from the uniform buy-at-bulk problem to the general robust network design ($\optspr$).
Andrews~\cite{Andrews,AndrewsPC} showed that under a plausible complexity theoretic assumption ($NP \not\subseteq ZPTIME(n^{\mathrm{polylog}~n})$), there is
no polytime algorithm for uniform buy-at-bulk with approximation guarantee within $O(\log^{1/4-\epsilon}(n))$, for any $\epsilon >0$. These two results
imply that the gap between $\optmpr,\optspr$ must be similarly large, since $\optmpr$ is polytime computable, and could otherwise be used
to approximate (the decision form of) uniform buy-at-bulk.
We will demonstrate all the other gaps in this paper; most of the work is on the most interesting case, between $\optdyn$ and $\optmpr$.




\paragraph{Discussion.} In the robustness paradigm, the question of how large these gaps
can be is asked for specific classes of demand polyhedra. A class
that has received much attention consists of the so-called ``hose
models'' which come in symmetric and asymmetric flavours. In the
symmetric hose model, each terminal $v$ has an associated
\emph{marginal} $b_v$, which represents an upper bound on the
\emph{total} amount of traffic that can terminate at $v$. The demand
polytope consists of all symmetric demands which do not violate
these ``hose'' constraints; i.e. $\sum_j D_{ij} \leq b_i$ for each
terminal $i$. The asymmetric hose problem is similar, but the
terminals are divided into sources and sinks; all demand is between
source and sink nodes, and again, total demand to or from a terminal
cannot exceed its marginal. Classes such as the hose model arise
naturally in switch design, but they were also motivated by
applications in data networks \cite{fingerhut97,guptavpn01}; one of
these is referred to as the {\em virtual private network} (VPN)
problem.

It is implicit in Fingerhut et al.~\cite{fingerhut97} and explicit in Gupta et
al.~\cite{guptavpn01} that in the {\em symmetric} hose model,
$\optmpr \leq \optspr \leq 2\optdyn$. However, the gap instance
between $\optmpr$ and $\optdyn$ that we demonstrate in this paper is
in fact an instance of the asymmetric hose problem, and hence there
is a logarithmic gap for this latter model.\footnote{This rectifies
an earlier assertion (cf. Theorem~4.6 in \cite{chekurisurvey07}).}
We describe a class of graphs $G$, cost function $c$, and a demand
polytope $\Po$, such that $\optdyn(G,c,\Po)=O(n)$ but
$\optspr(G,c,\Po)=\BigOmega(n\log n)$ and
$\optmpr(G,c,\Po)=\BigOmega(n\log n)$. The polytope $\Po$ has the
property that all demands share a common ``sink'' node.


It turns out that the problem of designing an \SPR\ routing template
for our gap instance corresponds to the well-studied rent-or-buy
network flow problem, which is a
generalization of the Steiner tree problem. In this problem there is
only one demand matrix instead of a polytope of demands, but the
cost function is concave. We sketch the lower bound argument for
$\optspr$ separately in Section~\ref{sec:spr} since it is much
simpler; it proceeds by showing that the optimal \SPR\ templates may
be assumed to be tree templates for our gap instance.

The lower bound for $\optmpr$ is more involved.  We show that the cost of an \MPR\ template for our
gap instance can be characterized by a network design problem that we call \emph{buy-and-rent}.  Again
there is only one demand to be satisfied, but the cost function is more complex.
The buy-and-rent cost function seems to be new and natural: briefly, instead of asking that each edge be
either rented or bought, it allows that capacity may be partially bought and the rest rented.
This new cost function is more amenable to analysis, and leads to our lower bound for $\optmpr$.

\paragraph{Relation to congestion lower bounds.} We remark that our lower bounds for
the total cost model also imply lower bounds for minimizing the
maximum congestion, essentially because if every edge had congestion
at most $\alpha$, the total cost would also be bounded by a factor
$\alpha$. Since the polytope $\Po$ we use is a subset of the
single-sink demands routable in $G$, this also implies a result in
\cite{haji} which gives an $\BigOmega(\log n)$ bound for congestion
via oblivious routing of single sink demands (although their
analysis also extends to the case of lower bounding performance of a
general online algorithm). Congestion minimization problems can be
seen as equivalent to a robust optimization where one uses maximum
edge congestion as a cost function;    simply take the polytope
consisting of {\em all} single-sink demands which are routable  in
$G$ (this is a superset of our choice $\Po$). The construction in
\cite{haji} uses meshes (grids), building on work of
\cite{BartalL99,Maggs97}.
This construction does not seem to extend to the total cost model however, and
we use instead a construction based on expanders, extending and
simplifying a connection shown in earlier work
 \cite{ChekuriHardness07}.


\paragraph{Gaps for tree routing.} In Section~\ref{sec:trees} we give a family of instances (using a different demand polyhedron) showing that
the gap between $\optspr$ and $\optT$ can be $\BigOmega(\log n)$.
This immediately implies that the gaps between $\optdyn$ and $\optT$ and between $\optmpr$ and $\optT$ is $\BigOmega(\log{n})$ for this family of instances.

\section{A gap example}

\subsection{A robust network design instance}\label{sec:instance}

Let $G=(V,E)$ be a graph on $n$ nodes with constant degree $d \geq 3$ and edge-expansion at least $1$;
i.e. we have that $|\delta_{G}(S)| \geq |S|$ for all $S \subseteq V$
with $|S| \leq n/2$.  Here $\delta_{G}(S)$ denotes the set of edges in $E$ with one end-point in $S$
and the other outside $S$.  It is well-known that such edge-expanders with the above parameters exist.
Now add a special sink node $r$ to $V$ to obtain our instance $\bar{G}=(\bar{V}, \bar{E}) = (V \cup \{r\}, E \cup \{vr: v \in V\})$; see Figure~\ref{fig:instance}.

We look at a single-sink hose model (cf. \cite{ChekuriHardness07}),
where our demands come from a polytope $\mathcal{P}$ defined as
follows. We have a specified marginal capacity $b_v$ at each node:
$b_r  ~= \beta n$ (where $0 < \beta < 1$), and $b_v=1$ for all $v
\in V$. Each demand matrix $D_{ij} \in \mathcal{P}$ has the property
that $\sum_{j} D_{ij} \leq b_i$ for each node $i \in \bar{V}$, and
$D_{ij} > 0$ only if $r \in \{i,j\}$.  Although we  often think of
nodes routing flow towards the sink,  the demands and flows are undirected in
this paper.

\begin{figure}
    \centering
    \includegraphics{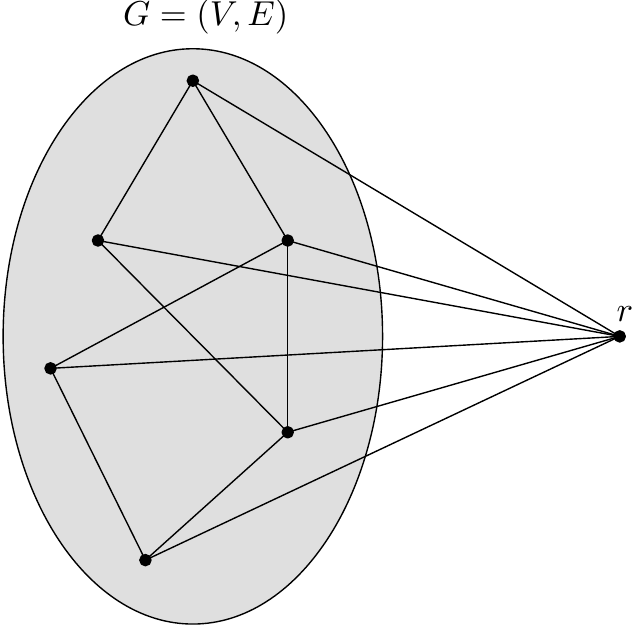}
    \caption{The gap instance. $G$ is a $d$-regular expander}\label{fig:instance}
\end{figure}

Thus each demand matrix we must
support, identifies a single-sink network flow problem. It is a
simple exercise to see that:

\begin{lemma} \label{lemma_br}
If $b_r$ is an integer, then our network
is robust for $\mathcal{P}$ and a given routing model if and only if for each subset $X$ of $b_r$ nodes in $G$,
there is enough capacity to route one unit from each node in $X$
to $r$, using the prescribed routing model.
\end{lemma}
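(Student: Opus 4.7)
The plan is to reduce routing every demand in $\mathcal{P}$ to routing only the 0/1 extreme points of $\mathcal{P}$, which correspond exactly to subsets of $V$ of size (at most) $b_r$.

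First I would identify the extreme points of $\mathcal{P}$. Since every $D \in \mathcal{P}$ has support only on pairs involving $r$, we may parametrize $\mathcal{P}$ by vectors $d \in \mathbb{R}_{\ge 0}^V$ with $d_v \le 1$ for all $v \in V$ and $\sum_v d_v \le b_r$ (where $d_v = D_{vr}$). When $b_r$ is an integer, the vertices of this polytope are exactly the 0/1 vectors with at most $b_r$ ones, i.e.\ indicator vectors of subsets $X \subseteq V$ with $|X| \le b_r$. A subset of size less than $b_r$ can be extended to one of size exactly $b_r$ by adding arbitrary extra nodes, and any routing valid for the larger instance restricts to one valid for the smaller, so it suffices to consider subsets of size exactly $b_r$.

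The heart of the argument is a linearity/convexity observation, handled case by case over the four routing models. For any oblivious template $f$ (\MPR, \SPR, or \TR), the capacity consumed on an edge $e$ by demand $D$ equals $\sum_{ij}D_{ij}\sum_{P \ni e} f_{ij}(P)$, which is a \emph{linear} functional of $D$. Hence its maximum over $\mathcal{P}$ is attained at a vertex, so the template respects the installed capacities on all of $\mathcal{P}$ iff it does so on every subset $X$ of size $b_r$. For \FR\ (dynamic routing), write any $D \in \mathcal{P}$ as a convex combination $D = \sum_k \lambda_k D^{(k)}$ of extreme points; if each $D^{(k)}$ admits a multicommodity flow $f^{(k)}$ satisfying the capacities, then $\sum_k \lambda_k f^{(k)}$ is a valid multicommodity flow for $D$ that also respects the capacities, since capacity constraints are linear in the flow.

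The converse direction is immediate: each subset $X$ of size $b_r$ corresponds to a demand matrix in $\mathcal{P}$, so any network robust for $\mathcal{P}$ must route it. I do not anticipate any real obstacle; the only thing to verify carefully is the per-model claim about how capacity usage depends on $D$, which is immediate from the definitions of the routing models given in the introduction.
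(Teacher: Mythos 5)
Your proof is correct, and since the paper omits the argument (calling it a ``simple exercise''), there is no alternative to compare it against; the route you take is the natural one. The two key facts you use are exactly right: the polytope $\{d \in [0,1]^V : \sum_v d_v \le b_r\}$ has only $0/1$ vertices when $b_r$ is an integer (which you could also justify quickly via total unimodularity of the single cardinality constraint plus box constraints), and the capacity consumed on an edge is a nonnegative linear functional of $D$ for a fixed oblivious template, hence maximized at a vertex, while for \FR\ one convexly combines the per-vertex flows. Your handling of the two asymmetries — subsets of size $<b_r$ being dominated by size-$b_r$ subsets, and the trivial forward direction — is also clean. The only thing worth stating a shade more explicitly is that for the oblivious models the same fixed template $f$ is being used across all $X$, so ``enough capacity for each $X$'' and ``enough capacity for all of $\mathcal{P}$'' are both statements about that one template; your linearity argument already establishes this but the phrasing could make it clearer that no re-choice of template is happening between demands.
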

We  use this fact below. Finally, we also assign costs to the edges:
each edge of $G$ has cost $1$, and each edge in
$\delta_{\bar{G}}(r)$ has cost $1/\beta$.

Our main result is the following theorem:
\begin{theorem}
\label{thm:gap}
For $\beta = 1/\log n$, there is a dynamic routing for the single-sink hose model instance (defined
above) of cost $O(n)$, but every
\MPR\ solution (and hence every \SPR\ solution) has cost $\BigOmega(n \log n)$.
\end{theorem}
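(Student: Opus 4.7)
For the upper bound I would install capacity $1$ on every edge of $G$ (total cost $|E(G)| = O(n)$) and capacity $\beta = 1/\log n$ on every direct edge $vr$ (total cost $n \cdot \beta \cdot (1/\beta) = n$). By Lemma~\ref{lemma_br}, it suffices to show that for every $X \subseteq V$ with $|X| = \beta n$, we can simultaneously route one unit from each $x \in X$ to $r$. Viewed as a single-sink flow problem, feasibility reduces via max-flow/min-cut to checking $|X \cap S| \leq |\delta_G(S)| + \beta|S|$ for every $S \subseteq V$ with $r \notin S$. When $|S| \leq n/2$, edge expansion of $G$ gives $|\delta_G(S)| \geq |S| \geq |X \cap S|$; when $|S| > n/2$, apply expansion to $V \setminus S$ together with $|X \cap S| \leq |X| = \beta n$. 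Both cases are immediate.

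For the lower bound, fix an MPR template $\{f_{vr}\}_{v \in V}$. By Lemma~\ref{lemma_br} the required capacity on edge $e$ equals the sum of the top $\beta n$ values of $\{f_{vr}(e) : v \in V\}$; write this as $\mathrm{top}_{\beta n}\{f_{vr}(e)\}_v$, so the total MPR cost is $\sum_e c(e) \, \mathrm{top}_{\beta n}\{f_{vr}(e)\}_v$. Applying LP duality to the top-$k$ operator rewrites this as
\[
    \min_{\lambda \geq 0} \sum_e c(e) \Bigl[\beta n\, \lambda_e + \sum_v (f_{vr}(e) - \lambda_e)^+\Bigr].
\]
I interpret the bracketed expression as a \emph{buy-and-rent} cost on the single-sink instance where each $v \in V$ ships one unit to $r$: we pay $c(e) \beta n$ per unit of ``bought'' capacity $\lambda_e$, plus $c(e)$ per commodity per unit of load exceeding $\lambda_e$ (``rent''). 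The task reduces to showing that this buy-and-rent cost is $\Omega(n \log n)$.

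The intuition for the lower bound is a dichotomy. Every unit flow $f_{vr}$ must enter $r$ through direct edges of cost $1/\beta = \log n$, so the aggregate shortest-path cost over the $n$ commodities is at least $n \log n$. A template has two natural ways to try to realize this cheaply: (i) keep each flow local, supported on $vr$, in which case every direct edge carries load $\approx 1$ costing $\log n$, summing to $n \log n$; or (ii) funnel flows through a small set of hub nodes to amortize direct-edge capacity, in which case routing $n$ units of commodity flow toward those hubs within the expander $G$ forces $G$-edge load of total cost $\Omega(n \log n)$, since average $G$-distance is $\Omega(\log n)$ and edge expansion prevents the $\mathrm{top}_{\beta n}$ operator from absorbing more than a $1/\log n$ fraction of this traffic on any single edge.

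The main obstacle is making this dichotomy quantitatively tight for arbitrary fractional templates and all choices of thresholds $\{\lambda_e\}$. The naive averaging bound $\mathrm{top}_{\beta n}\{f_{vr}(e)\}_v \geq \beta \sum_v f_{vr}(e)$ combined with the per-commodity shortest-path bound yields only $\Omega(n)$. To close the $\log n$ gap I would case-split on the buy cost $B = \sum_e c(e) \beta n\, \lambda_e$: if $B \geq n \log n$ we are done; otherwise the $\lambda_e$'s are too small in aggregate to absorb a significant fraction of any unit flow's entry into $r$, and the expansion of $G$---via a dual potential on the single-sink flow problem in the spirit of \cite{ChekuriHardness07}---forces an average residual rent of $\Omega(\log n)$ per commodity, yielding $\Omega(n \log n)$ in total. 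Formalizing this residual-rent bound is the crux of the proof.
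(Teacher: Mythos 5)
Your upper bound is correct and matches the paper's argument essentially verbatim (same capacity installation, same max-flow/min-cut verification case-split on $|S| \lessgtr n/2$). Your reformulation of the \MPR{} cost as buy-and-rent via LP duality on the top-$k$ operator is also exactly Proposition~\ref{prop:equiv} in the paper (there proved directly rather than via a duality citation, but it amounts to the same thing).

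The gap is in the lower bound, where you explicitly defer the key step: ``Formalizing this residual-rent bound is the crux of the proof.'' This is indeed where all the work lies, and your sketch of it is not quite the right mechanism. You appeal to edge expansion of $G$ to prevent the top-$\beta n$ operator from absorbing traffic, and to ``a dual potential on the single-sink flow problem.'' But expansion plays no role in the paper's lower bound --- it is used only for the upper bound. What makes the lower bound work is the \emph{bounded degree} $d$ of $G$, which forces the ball $B(v)$ of radius $R \approx (\log n)/(2\log d)$ around each terminal $v$ to contain at most $\sqrt{n}$ vertices. This gives the double-counting bound $\sum_v \gamma^E(v) \le \sqrt{n}\,\gamma(E)$, i.e.\ the total bought expander capacity seen by all terminals within their local balls is tiny once one assumes (via your dichotomy) that $\gamma(E) = O(\log^2 n)$. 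The rent lower bound then comes from a cut argument: each terminal $v$'s flow that neither rents its own port edge nor exits through a bought port inside $B(v)$ must cross all $R$ nested cuts $\delta(B_i(v))$, $i < R$, and the bought capacity can absorb only $\gamma(C_i)$ per cut, forcing rental $\ge R\mu_v^t - \gamma^E(v)$. The three-way split of each terminal's unit flow (rent-port / local-bought-port / travelling) and the ball-size counting are the missing ideas; neither a generic ``dual potential'' nor expansion delivers them, so as written the proposal does not establish the $\Omega(n\log n)$ bound.
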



The first assertion is proved in the next section. In Section~\ref{sec:spr},
we see that determining $\optspr$ for single-sink hose models is equivalent
to the well-studied \emph{single-sink rent-or-buy} problem, 
and the rent-or-buy  problem always has a tree solution.
This can be used to show that $\optspr = \BigOmega(n \log n)$ for our instance with $\beta=1/\log{n}$.
We give a sketch of a proof of this since it is considerably simpler than (but implied by) the proof of the corresponding bound for \MPR.
This \MPR\  lower bound is demonstrated in Section~\ref{sec:mpr}.

We assume throughout the paper that $b_r = \beta n$ is an integer.

\subsection{A solution for the dynamic routing model}
Put capacity $\beta$ on each edge of $\delta_{\bar{G}}(r)$, and
capacity $1$ on each edge of $G$. Clearly, the cost of this
reservation is $O(n)$ independent of $\beta$. We  show that this is
a valid \FR\ capacity reservation. Using Lemma~\ref{lemma_br} it
suffices to show that for any subset of $\beta n$ nodes $X$ in $G$,
all nodes in $X$ can simultaneously route a unit flow to $r$.  To
this end, we add a new node $t$ to $\bar{G}$ and edges $vt$ for
$v \in X$ with unit capacity to form graph $G'$. We  show that $G'$
supports a $t$-$r$ flow of size $|X|=\beta n$.  By the max-flow
min-cut theorem it suffices to show that all $r$-$t$ cuts in $G'$
have size at least $\beta n$, i.e. that for each $S \subseteq V$ we
have $|\delta_{G'}(S \cup \{t\})| \geq \beta n$.

We have
\[ |\delta_{G'}(S \cup \{t\})| = \beta |S| + |X \setminus S| + |\delta_G(S)|. \]
Now, if $|S| \leq n/2$ then using the fact that for $G$ we have $|\delta_G(S)| \geq |S|$ we get
\begin{align*}
|\delta_{G'}(S \cup \{t\})|  &\geq \beta |S| + |X \setminus S| + |S| \\
 &\geq \beta |S| + |X|\\ &\geq  |X|.
\end{align*}

And if $|S| > n/2$ then using the fact that for $G$ we have $|\delta_G(S)| \geq n-|S|$ we get
\begin{align*}
|\delta_{G'}(S \cup \{t\})| & \geq \beta |S| + |X \setminus S| + n-|S| \\
& \geq  \beta |S| + |X \setminus S| + \beta(n-|S|)  \\
& =  \beta n + |X \setminus S| \\
& \geq  \beta n.
\end{align*}

%
%

Hence the above capacity reservation can support the \FR\ routing model and costs $O(n)$.

\subsection{Rent-or-buy: an $\BigOmega(\log n)$ gap between \FR\ and \SPR}
\label{sec:spr}

Note that the optimal cost oblivious \SPR\ network can
be cast as a minimum cost (unsplittable) flow problem as follows.
Each node $v \in V$ must route one unit of flow on a path $P_v$ to $r$ and the overall
(truncated) cost of path choices is: $\sum_{e} c(e)
\min\{N(e),b_r\}$, where $N(e)$ is the number of nodes $v$ whose
path to $r$ used the edge $e$. Clearly, if the capacity of each edge is
$\min\{N(e),b_r\}$, then we have sufficient capacity to route any
demand matrix in $\mathcal{P}$ using as a template the paths $P_v$.
The converse is in fact also true and easy; any template gives rise to a
corresponding integer flow whose truncated cost is the same.

This truncated routing cost problem is simply a so-called single-sink  {\em
rent-or-buy} (\ssrob) problem (see e.g. \cite{Grandoni08Core,GuptaKPR07}).
We are given a network $G$ with edge costs $c(e)$, and a special sink node $t$.
A parameter $B \geq 1$ is also given (this will equal $b_r$ in the instance corresponding to \SPR).
We also have a list of sources $s_i$ for $i=1,2 \ldots, p$; each
source needs to route to the sink $t$. For each edge in the network,
we may either purchase it at a cost of $Bc(e)$, in which case it is
deemed to have infinite capacity, or otherwise we may rent it. In that case,
we must pay $c(e)$ per unit of capacity that we use on the edge. The
goal is to find which edges to buy and which to rent in order to
support a flow from each node to $t$, at the smallest possible cost.
In other words, we seek a fractional flow $\vec{f}$ of the demands
that minimizes $\sum_{e\in E} c(e)\min\{f(e), B\}$ (we will see next that the optimal solution will always be integral, ensuring that we do in fact have a correspondence with \SPR). In general, we
may also consider such \emph{single-sink flow problems with concave
costs} $\sum_e  g_e(f(e))$ where each $g_e$ is a concave function.

The following result is immediate from the concavity of the cost function:
\begin{proposition}
\label{prop:tree}
Any single-sink flow problem with nondecreasing concave costs has an optimal solution whose support is a tree.
In particular, such an optimal solution always exists for the \ssrob\ problem.
\end{proposition}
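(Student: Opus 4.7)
The plan is a classical cycle-cancellation argument leveraging concavity. Begin with any optimal flow $\vec{f}^{\,*}$ and let $F=\supp(\vec{f}^{\,*})$. Suppose $F$ contains an undirected cycle $C$. Orient each edge $e\in C$ in the direction of its flow, and traverse $C$ in an arbitrary circuit direction; this partitions $C$ into forward edges $C^+$ and backward edges $C^-$. For $\epsilon$ in some interval $[-\epsilon^-,\epsilon^+]$ with $\epsilon^-,\epsilon^+>0$, the modified flow that adds $\epsilon$ on each $e\in C^+$ and subtracts $\epsilon$ on each $e\in C^-$ remains feasible (flow conservation at every node is preserved since we augment along a closed cycle). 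At each endpoint of this interval the flow on at least one edge of $C$ drops to $0$.

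The cost change
\[
\Delta(\epsilon)=\sum_{e\in C^+}\bigl[g_e(f(e)+\epsilon)-g_e(f(e))\bigr]+\sum_{e\in C^-}\bigl[g_e(f(e)-\epsilon)-g_e(f(e))\bigr]
\]
is a sum of concave functions of $\epsilon$ and hence concave. Since a concave function on a closed interval attains its minimum at an endpoint and $\Delta(0)=0$, at least one endpoint gives $\Delta\leq 0$. Moving to that endpoint preserves optimality (by optimality of $\vec{f}^{\,*}$, in fact $\Delta=0$ there) while strictly reducing $|F|$: some edge drops out of the support and no new edges enter, since only edges of $C$ are modified. Iterating eliminates every cycle.

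It remains to see that the resulting acyclic support is connected, hence a tree. Let $K$ be any connected component of the support. Since no positive-flow edge crosses between $K$ and its complement, flow conservation restricted to $K$ must balance. The only nodes with nonzero net outflow are the sources ($+1$ each) and the sink ($-p$), so if $t\notin K$ then $K$ contains no sources; a peeling induction on leaves of the tree $K$ then forces every edge flow in $K$ to be $0$, contradicting $K\subseteq F$. Hence every component contains $t$, so $F$ is a single tree. For \ssrob, $g_e(x)=c(e)\min\{x,B\}$ is nondecreasing and concave, so the proposition applies; the resulting tree immediately yields an integral single-path template, since each source sends its unit along its unique path in the tree to $t$.

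The only subtle point is the concavity-driven endpoint choice in cycle cancellation; the rest is standard flow bookkeeping, and I anticipate no serious obstacle.
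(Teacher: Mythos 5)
Your argument follows the same basic route as the paper's: cancel cycles in the support of an optimal flow using concavity. The variations are cosmetic (you iterate and shrink the support at each step, where the paper picks an optimal flow with setwise-minimal support and derives a contradiction; you invoke that a concave function on a compact interval is minimized at an endpoint, where the paper uses the midpoint inequality $C(\vec{f}) \geq \tfrac12[C(\vec{f}^+)+C(\vec{f}^-)]$). Your added observation that every component of the acyclic support must contain the sink, hence the support is literally a tree, is a welcome bit of precision that the paper leaves implicit.

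There is, however, one small but genuine gap. You write that orienting the edges of $C$ by flow direction and choosing a circuit direction ``partitions $C$ into forward edges $C^+$ and backward edges $C^-$,'' and then you work with the closed bounded interval $[-\epsilon^-,\epsilon^+]$. This tacitly assumes both $C^+$ and $C^-$ are nonempty. But an arbitrary optimal flow may contain a \emph{directed} cycle (all flows consistent with one circuit direction), in which case one of the two sets is empty and the corresponding $\epsilon$-bound is $+\infty$: the interval is a half-line and the ``minimum at an endpoint'' argument does not apply. The paper eliminates exactly this case first, using the fact that each $g_e$ is \emph{nondecreasing}: decreasing the flow on a directed cycle by $\min_{e\in C} f(e)$ never increases cost and zeroes out an edge. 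You should insert the same preliminary step (or, alternatively, argue that since $\Delta(\epsilon)\geq 0$ on $[-\epsilon^-,\infty)$ by optimality and $\Delta$ is concave with $\Delta(0)=0$, one must have $\Delta(-\epsilon^-)=0$); as written, the hypothesis that $g_e$ is nondecreasing is never actually used in your cycle-cancellation step, which should be a warning sign since the proposition is false without it.
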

\iftrue
\begin{proof} 
    Let $\vec{f}$ be a flow giving an optimal solution to the flow problem, chosen so that $\supp(\vec{f})$ is setwise minimal.
We show that then $\supp(\vec{f})$ must form a tree.


Let us consider $\vec{f}$ as a directed flow, where each terminal sends flow
to the sink. If there is any directed cycle in the support of $\vec{f}$, then we
may simply reduce flow on this cycle until some arc becomes zero;
this does not increase the cost since our cost function is
nondecreasing.
So we may assume our support is acyclic in
the directed sense.
Suppose now that there is some undirected
cycle $C$ in the support which by assumption corresponds to some  forward (traversing $C$ in order) arcs $F$
and some reverse arcs $R$. Let $\epsilon = \min\{f(a): a \in R \cup F \}$. 
Define two solutions $\vec{f}^+,\vec{f}^-$ by $f^{\pm}(a) = f(a) \pm \epsilon$ for $a \in F$, and $f^{\pm}(a)=f(a) \mp \epsilon$ for $a \in R$. 
By concavity, $C(\vec{f}) \geq (1/2) [C(\vec{f}^+)+C(\vec{f}^-)]$.
Then since $\vec{f}$ was an optimal solution, $C(\vec{f}^+) = C(\vec{f}^-) = C(\vec{f})$. Hence both $\vec{f}^+$ and $\vec{f}^-$ are optimal, and one of them must have smaller support than $\vec{f}$, a contradiction.
\qed
\end{proof}
\fi

Note that the preceding result shows that in the case of
single-sink hose models, $\optspr=\optT$.
It is not the case that $\optmpr=\optT$ in this setting however.  If that were the case, \ssrob\ would be polynomially solvable,
but the case where $b_r=1$ already captures the Steiner Tree problem.
Because of this tree structure, arguing why the gap holds in the case of \SPR\ is considerably simpler.
The argument contains some intuition as to why the gap also holds for \MPR, so we describe the approach now.

By Proposition~\ref{prop:tree}, we may represent the optimal \SPR\ solution with a tree $T$, which we think of as being rooted at $r$.
First let us suppose that the solution uses only one edge $rv$ from $\delta(r)$, so that all terminals must route via $v$.
Since $G$ was bounded degree this means that many nodes (a constant
fraction) must use long paths, of length $\log_d (n)$.
If these all had to pay one unit along their whole path, then this
already costs $\BigOmega ( n \log n)$.
But it is not as easy as that; if we have a subtree $T_w$ rooted at node $w$ that contains at least $b_r = \beta n$ nodes, then in fact we only need to pay for $b_r$ units on the edge out of $w$.

Imagine removing the edges of $T$ which are used by more than $\beta
n$ terminals, leaving a number of subtrees, each containing at most
$\beta n$ terminals. If $T$ is fairly balanced, there are around
$\BigTheta(n/(\beta n)) = \BigTheta(1/\beta)$ such subtrees. (If $T$
is very unbalanced, there could be many more---consider a
caterpillar. For the full proof, one must use the larger distances
of leaves to the root to get the required bound.)
%
Roughly speaking, in each such subtree, a good fraction of the leaves are a distance roughly $\log \beta n$ from the root of this subtree.
Since there is no cost sharing within this subtree, these nodes really do pay $\beta n \log (\beta n)$.
Thus the subtrees combined pay \[ \BigOmega\left(1/\beta \cdot \beta n \log (\beta n) \right) = \BigOmega\left(n\log (\beta n)\right). \]
If we set $\beta = \frac{1}{\log n}$, this yields a cost of $\BigOmega(n \log n)$.

To make the above argument precies, we must balance the use of multiple edges into $r$.
Label the children of $r$ in $T$, $1$ through $m$, and let $k_i$ be the size of subtree $i$. Let $L$ be the set of \emph{heavy} children of $r$ in $T$, i.e., $k_i > \beta n$; let $R$ be the set of \emph{light} children of $r$.

Suppose $i$ is a heavy child. The subtree $T_i$ routed at $i$ has some set of heavy edges $H$, i.e., edges with flow more than
$b_r = \beta n$; let $p = |H|$. Now consider the tree $T_i'$ obtained from $T_i$ by contracting the edges in $H$. The root of $T_i'$ has degree at most $pd$; all other nodes have maximum degree $d$. The maximum number of nodes that are a distance less than $j$ from the root is
\[ \sum_{i=1}^j pd^i = pd(d^j-1)/(d-1). \]
Taking $j=\log_d (k_i/10p)$, the above is a constant fraction of the nodes, and the rest must be further away. So a constant fraction of the nodes are a distance $\BigOmega(\log(k_i/p))$ away from the root of $T_i'$. Since the edges in $T_i'$ are not heavy, these nodes contribute
\[\BigOmega(k_i\log (k_i/p)) = \BigOmega(k_i\log(\beta n/p))
\]
 to the cost of tree $T_i$. Adding the cost of the heavy edges, we get a total cost of
\[ \BigOmega(p\cdot \beta n+ k_i\log(k_i /p)). \]

We verify that this is at least $\BigOmega(k_i\ln (\beta n))$. It suffices to show that
$p\cdot \beta n+ k_i\ln(k_i /p) \geq k_i\ln (\beta n)$.
But this is equivalent to
\[ \frac{p \beta n}{k_i} \geq \ln \left(\frac{p \beta n}{k_i}\right), \]
which is clearly true since $x \geq \ln x$ for all $x > 0$.

For $i \in R$, a lower bound can be obtained by considering only the edge $ir$, which contributes $k_i / \beta$ (remembering that edges adjacent to $r$ have cost $1/\beta$). So we have the following lower bound on the cost of the tree solution:
\[ \sum_{i \in R} k_i/\beta + \sum_{i \in L} k_i\log(\beta n) / C, \]
where $C$ is some constant. This is at least
\[ n\cdot \min (1/\beta, \log(\beta n)/C). \]
Setting $\beta = 1/\log n$ gives the result.

\subsection{Buy-and-rent: an $\BigOmega(\log n)$ gap between \FR\ and \MPR}
\label{sec:mpr}

The main difficulty in analyzing the \MPR\ model is 
that we can no longer restrict to tree like routings as we could for \SPR; there is no equivalent of Proposition~\ref{prop:tree} for \MPR.
In particular, the \MPR\ problem for our instance is not captured by a
\ssrob\ problem. Instead, we get a new kind of routing cost model as explained
below.

Let us first examine more closely the cost on edges induced by an \MPR\ routing
template for a single-sink hose design problem.  As in Lemma~\ref{lemma_br}, it is sufficient to consider the cases
where we wish the network to support the routing of any $\beta n$ of the nodes in $V$ to the sink $r$ simultaneously.
Suppose that $f_i(e)$ represents the flow that node $i$ sends on edge $e$ in a template, then for the single
sink hose design problem, the formula for the capacity needed by $e$ is:
\begin{align} \label{eqn:mprcost}
\max_{D \in \mathcal{P}} \sum_{i \in V} D_{ir} f_i(e)  = \max_{W \subseteq V: |W|=\beta n} \sum_{i \in W} f_i(e),
\end{align}
where $\mathcal{P}$ is the set of single-sink hose matrices.
In other words, the capacity needed on edge $e$ is just the sum of the $\beta n$ largest values of $f_i(e)$.

We introduce a new routing cost model which we call (single-sink) {\em
buy-and-rent} ({\BaR}). This  exactly models the \MPR\ cost model
defined above, but is more manageable in terms of analysis. In the
buy-and-rent problem, there are costs on the edges, and unit demands
from some subset $W$ of nodes called {\em terminals}. Each terminal
wishes to (fractionally) route one unit  of demand to the sink $r$.
Apart from the costs $c(e)$ on the edges, we also have a parameter
$k$.
The difference from rent-or-buy is that we may now purchase some capacity
amount $\gamma(e) \in [0,1]$ (in rent-or-buy we would buy an infinite capacity
link) and the interpretation is that every terminal is allowed to use up to
$\gamma(e)$ units of capacity on the edge. If it chooses to route any more on
that edge, then it must pay for the additional rental cost.
The cost of purchasing capacity on an edge $e$ is $k \gamma(e) c(e)$.

Buy-and-rent can be considered as an LP relaxation of (single-sink) rent-or-buy; this formulation is in fact very similar to the LP relaxation used by Swamy and Kumar~\cite{SwamyK04} to give constant factor approximation algorithms for connected facility location and single-sink rent-or-buy.
Their formulation is stronger however (in that the optimum for their LP lies between the \BaR\ and \SPR\ optima), and so does not exactly model the \MPR\ problem.
In particular, in buy-and-rent, solutions may conceivably use flow paths that alternate several times between rented capacity and purchased capacity.
In contrast, a solution to the LP of Swamy and Kumar~\cite{SwamyK04} always has a connected ``core'' of purchased edges containing the sink node and terminals use rented capacity to route to that core.

\begin{proposition}
\label{prop:equiv}
The buy-and-rent problem with parameter $k = \beta n$, and the single-sink hose design problem in the
\MPR\ routing model have the same optimal solution.
\end{proposition}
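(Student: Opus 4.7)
The plan is to reduce the statement to a per-edge equality: once the flow variables $\{f_i(e)\}$ are fixed, the minimum per-edge cost in buy-and-rent (optimizing over $\gamma(e)$) coincides with the per-edge capacity cost in the \MPR\ model. Since both problems share the same feasibility set (each node $i \in V$ routes one unit of flow to $r$) and both objectives decompose additively over edges, this per-edge equivalence will immediately yield equality of optimal values.

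For the \MPR\ side, equation~\eqref{eqn:mprcost} says that the required capacity on $e$ is the sum of the $k = \beta n$ largest values among $\{f_i(e) : i \in V\}$; writing $f_{(1)}(e) \ge f_{(2)}(e) \ge \cdots$ for the decreasing rearrangement, the \MPR\ contribution from $e$ is $c(e) \sum_{i=1}^{k} f_{(i)}(e)$. For the buy-and-rent side, the contribution of $e$ as a function of $\gamma := \gamma(e) \in [0,1]$ is
\[
h_e(\gamma) \;=\; c(e)\Bigl(k\gamma + \sum_{i \in V} \max\{f_i(e) - \gamma,\, 0\}\Bigr),
\]
which is piecewise linear in $\gamma$ with slope $c(e)\bigl(k - |\{i : f_i(e) > \gamma\}|\bigr)$. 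I would observe that this slope is nonpositive for $\gamma < f_{(k)}(e)$ (since then at least $k$ of the $f_i(e)$ strictly exceed $\gamma$) and strictly positive for $\gamma > f_{(k)}(e)$, so $h_e$ is minimized at $\gamma = f_{(k)}(e)$. A direct substitution, in which the terms $f_{(i)}(e) - f_{(k)}(e)$ for $i < k$ combine with $k \cdot f_{(k)}(e)$, evaluates this minimum to exactly $c(e)\sum_{i=1}^{k} f_{(i)}(e)$, matching the \MPR\ contribution. Because $f_i(e) \le 1$ always, the optimal $\gamma$ lies automatically in $[0,1]$, so the box constraint is never binding.

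To conclude, any \MPR\ template $\{f_i\}$ yields a buy-and-rent solution of equal cost by setting $\gamma(e) := f_{(k)}(e)$; conversely, any buy-and-rent solution $(\{f_i\}, \gamma)$ gives the \MPR\ template $\{f_i\}$, whose capacity cost is at most the buy-and-rent cost since $\gamma$ may be suboptimal edge by edge. Taking minima over feasible flow templates on both sides gives equality of the two optima, proving the proposition. The only nontrivial ingredient is the piecewise-linear analysis locating the optimal purchase level at the $k$-th order statistic $f_{(k)}(e)$, which is essentially a sorting argument and is the main (albeit modest) obstacle.
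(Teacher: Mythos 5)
Your proof is correct and follows essentially the same path as the paper's: identify that the per-edge buy-and-rent cost is minimized at $\gamma(e) = f_{(k)}(e)$ and that the resulting minimum equals the \MPR\ cost $c(e)\sum_{i\le k} f_{(i)}(e)$. The paper phrases the reverse direction as a local exchange argument (perturbing $\gamma$ up or down by $\epsilon$) rather than as a slope analysis of the piecewise-linear $h_e$, but this is the same calculation; your presentation as a single per-edge optimization with a shared feasibility set is a slightly cleaner way to package the two directions.
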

\begin{proof}
Suppose that $(\vec{f}_i)$ is an \MPR\ routing template for the robust
hose design problem. Consider the {\BaR} solution for parameter $k = \beta n$
obtained as follows. For each edge $e$, order the terminals so that
$f_{\pi(1)}(e) \geq f_{\pi(2)}(e) \geq \ldots f_{\pi(n)}(e)$. Then
we purchase $\gamma(e) = f_{\pi(k)}(e)$ units of capacity on edge
$e$, and we use the same routing $\vec{f}_i$ as the \MPR\ solution. This
guarantees that for any edge, none of the terminals $\pi(j)$ with $j
> k$,  pays to route on edge $e$ since we purchased enough
capacity for them to travel for free. For each terminal $\pi(j)$
with $j \leq k$, it must pay the rental cost to route
$f_{\pi(j)}(e)-f_{\pi(k)}(e) \geq 0$. This costs $c(e)$ times the
amount $\sum_{j \leq k} (f_{\pi(j)}(e)-f_{\pi(k)}(e)) = \sum_{j \leq
k} f_{\pi(j)}(e) - k f_{\pi(k)}(e)$. Since the purchased capacity
cost $ k f_{\pi(k)}(e) c(e)$, the total buy-and-rent cost is $c(e)
\sum_{j \leq k} f_{\pi(j)}(e)$ which is the cost of edge $e$ in the
\MPR\ template using \eqref{eqn:mprcost}.

Conversely, suppose that we have a minimum cost solution for {\BaR}
and consider the robust design cost for using the same routing as a template.
Without loss of generality $\gamma(e) = f_{\pi(k)}(e)$ since if
$\gamma(e)$ was larger than this, then by reducing the capacity
bought by sufficiently small $\epsilon > 0$, the rental costs are
unaffected for terminals $\pi(j)$ for $j \geq k$. And for terminals
$\pi(j)$ with $j < k$, their rental cost increases by at most
$\epsilon c(e)$. Hence the total rental cost increases by $(k-1)
\epsilon c(e)$, and the total cost of bought capacity reduces by $k
\epsilon c(e)$, thus decreasing the overall cost.

Similarly, if $\gamma(e) < f_{\pi(k)}(e)$, then increasing the
bought capacity $\gamma$ by some small $\epsilon > 0$, has cost of
$k \epsilon c(e)$.  But the reduction in rental costs is at least
the reduction in rental cost of the first $k$ terminals which is $k
\epsilon c(e)$, and thus the overall cost does not increase as a
result of increasing $\gamma$. Hence the cost of edge $e$ is just
the purchase cost $c(e) \cdot  k f_{\pi(k)}(e)$ plus the rental cost
$c(e) \sum_{j \leq k} (f_{\pi(j)}(e)-f_{\pi(k)}(e))$ and this is
identical to the robust design cost when using the same template.
\qed
\end{proof}

We again take $\beta = 1/\log n$ (so $k = n / \log n$).
We now prove that any solution to the \BaR\ problem on our expander
instance is expensive; this together with the preceding proposition
implies our main result, Theorem~\ref{thm:gap}.
\begin{theorem} Any solution to the \BaR\ problem
with $\beta = 1/\log{n}$ on the expander instance (defined in Sec.~\ref{sec:instance}) has cost $\BigOmega(n \log n)$.
\end{theorem}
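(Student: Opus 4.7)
The plan is to do a case analysis on the total purchased capacity $U := \sum_{v \in V} p(vr)$ along $\delta(r)$, mirroring the heavy/light dichotomy of the \SPR\ argument from Section~\ref{sec:spr}. Each unit of $p(vr)$ contributes buy cost $k\cdot c(vr) = (n/\log n)\cdot\log n = n$, so if $U \geq \tfrac{1}{2}\log n$ then the buy cost on $\delta(r)$ alone is at least $\tfrac{1}{2}n\log n$ and we are done. Otherwise, the set of ``hubs'' $H := \{v \in V : p(vr)\geq 1/2\}$ satisfies $|H|\leq 2U < \log n$, so outside of $H$ essentially no $\delta(r)$ capacity is available ``for free''.

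The second step partitions each terminal's unit of flow by where it exits: let $a_i^H := \sum_{v\in H} f_i(vr)$ be the flow exiting through hub edges, and $a_i^N := 1 - a_i^H$ through non-hub edges. If $\sum_i a_i^N$ is a constant fraction of $n$, then since $p(vr) < 1/2$ on non-hub edges, a direct lower bound on the \BaR\ rental term $\sum_i (f_i(vr) - p(vr))_+$ combined with $c(vr)=\log n$ yields $\BigOmega(n\log n)$ rental on non-hub $\delta(r)$ edges. In the complementary case, a constant fraction of all flow routes through $G$ to reach one of the $|H|\leq \log n$ hubs; since $G$ has bounded degree $d$, BFS balls around $H$ grow by at most factor $d$ per step, so the dispersion bound $\sum_i d_G(i,H) = \BigOmega(n\log(n/|H|)) = \BigOmega(n\log n)$ holds, and consequently the total $G$-edge flow $\sum_{e\in G} T_e$ (where $T_e := \sum_i f_i(e)$) is $\BigOmega(n\log n)$.

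The main obstacle is then to convert this $G$-flow bound into the desired $G$-cost bound $\sum_{e\in G} c(e)\,u(e) = \BigOmega(n\log n)$. The naive averaging inequality $u(e)\geq (k/n)T_e$ loses exactly the $\log n$ factor we need, yielding only $\BigOmega(n)$. To recover the factor I would imitate the heavy/light tree argument from the \SPR\ proof: build a BFS forest in $G$ rooted at $H$, let $s_e$ denote the number of terminals whose paths to $H$ cross tree edge $e$, and give a lower bound on $u(e)$ in terms of $s_e$ together with the amount of flow routed on $e$. The bounded degree of $G$ restricts how thinly an individual terminal's unit flow can fractionally split at each vertex, which should force the top-$k$ sum $u(e)$ to be within a constant factor of $\min(s_e, k)$ on a constant fraction of the tree edges; summing $c(e)\min(s_e,k)$ then reproduces the $\BigOmega(n\log n)$ bound exactly as in the \SPR\ analysis. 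Making this last step quantitative—controlling the loss from fractional spreading against the degree constraint—is the delicate point where \BaR\ genuinely departs from and is harder than \SPR.
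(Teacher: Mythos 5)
Your proposal has two genuine gaps, and you acknowledge the second one yourself.

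First, case~(a) fails: the rental term $\sum_i \max(f_i(vr)-p(vr),0)$ on non-hub port edges can be zero even when $\sum_i a_i^N = \BigOmega(n)$. A terminal may spread its exit flow thinly over many non-hub ports with $f_i(vr) < p(vr)$ on each; for instance $p(vr) = 1/n$ for all $v$ (so $U=1$ and $H=\emptyset$) together with $f_i(vr) = 1/n$ for all $i,v$ gives $\sum_i a_i^N = n$ but zero port rental. The cost of such a solution sits entirely in the expander, which case~(a) never examines, so the hub/non-hub dichotomy does not decouple the port and expander costs.

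Second, in case~(b), converting the aggregate flow bound $\sum_{e \in E} T_e = \BigOmega(n\log n)$ into a cost bound is exactly the missing step, and the BFS-forest/degree argument you sketch will not close it. Bounded degree does \emph{not} limit how thinly fractional flow splits along a long path: after $\log_d n$ hops one unit of flow can disperse into pieces of size $1/n$, so the top-$k$ sum $u(e)$ on a given tree edge is not forced to be comparable to $\min(s_e,k)$. You correctly identify this as the delicate point, but the repair you propose does not supply the argument.

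The paper's proof avoids both problems with a per-terminal cut-counting argument that uses neither hubs nor trees. For each terminal $v$ it considers the ball $B(v)$ of radius $R=\BigTheta(\log n)$ in $G$, chosen so that $|B(v)|\leq\sqrt n$, and splits $v$'s unit of flow into a part that rents its port edge (paying $\log n$), a part that uses bought port capacity on ports inside $B(v)$, and a part that leaves $B(v)$. The leaving part must cross each of the $R$ nested cuts $\delta(B_i(v))$, paying rental at least its own mass minus the bought capacity on that cut. Because each ball has size at most $\sqrt n$, the total bought expander and port capacity seen across all balls is only $\tilde O(\sqrt n)$ once one assumes, as one may, that the total bought capacity is polylogarithmic (otherwise the buy cost alone is $\BigOmega(n\log n)$). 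Averaging then shows almost every terminal pays $\BigOmega(R)=\BigOmega(\log n)$ in rental, with no need to control fractional spreading at all. This per-terminal averaging over balls and cuts is the key mechanism your proposal lacks.
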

\newcommand{\gammaP}{\gamma(\delta(r))}
\newcommand{\gammaE}{\gamma(E)}
\newcommand{\gammaBE}{\gamma^E}
\newcommand{\gammaBP}{\gamma^P}
\newcommand{\Gclosure}{\bar{G}}
\newcommand{\flowr}{\mu^r}
\newcommand{\flowb}{\mu^b}
\newcommand{\flowt}{\mu^t}
\newcommand{\Crent}{C(\text{rent})}
\newcommand{\Rad}{R}
\begin{proof}
Consider an arbitrary \BaR\ solution, determined by bought capacity $\gamma_e$ on each edge, and a flow template
$(\vec{f}_i: \mbox{ for each terminal $i$})$.

For a set $A$ of edges, let $\gamma(A) := \sum_{e \in A} \gamma_e$. Thus
$\gammaP := \sum_{v \in V} \gamma_{vr}$ is the total bought capacity on the \emph{port edges} (these are the edges
connecting $r$ to the nodes in $V$), and $\gammaE := \sum_{e \in E} \gamma_e$ is the capacity bought in the expander.
The cost of buying capacity in the expander is then $k \cdot \gammaE$, so we may assume that $\gammaE < \log^2n$,
or else the solution already costs $\BigOmega(n \log n)$.
A similar argument for port edges (but recalling that these edges cost $\log n$) allows us to assume that $\gammaP < \log n$.

    For a terminal $v$, let $B_i(v)$ be the set of nodes (or sometimes, their induced graph)
in the expander that are a distance at most $i$ from $v$.
    We are particularly interested in balls of radius $\Rad := \lfloor \log_d \sqrt{n}  \rfloor  - 1 =
\lfloor \log n / (2 \log d) \rfloor  - 1$;
    we  use $B(v)$ as shorthand for $B_\Rad(v)$.
    Note that since $G$ is $d$-regular,
    \[ |B(v)| \leq \sum_{i=0}^{\Rad} d^i \leq d^{\Rad +1} \leq n^{1/2}. \]

    Let $\gammaBE(v) := \sum_{e \in E : e \subset B(v)} \gamma(e)$ and $\gammaBP(v) := \sum_{w \in B(v)} \gamma(wr)$.
    A single $\gamma(e)$ for an edge $e=u_1u_2$ contributes to many $\gammaBE(v)$'s, but not too many:
    \[ |\{v : e \subset B(v)\}| \leq |\{v : u_1 \in B(v) \}| = |B(u_1)| \leq n^{1/2}.  \]
    So we must have that
    \begin{equation}\label{eq:gammaBE}
        \sum_{v \in V} \gammaBE(v) \leq n^{1/2}\gammaE \leq n^{1/2}\log^2 n.
    \end{equation}
    Similarly,
    \begin{equation}\label{eq:gammaBP}
        \sum_{v \in V} \gammaBP(v) \leq n^{1/2} \log n.
    \end{equation}

Consider an arbitrary terminal $v$. The unit of flow from $v$ can be divided up into three types depending on how
the flow enters $r$:
\begin{itemize}
    \item A fraction $\flowr_v$ of flow that rents on the port edge it uses.
    \item A fraction $\flowb_v$ of flow that uses bought port capacity, on a port within a distance $\Rad$ from $v$.
    \item A fraction $\flowt_v$ representing all remaining flow; this flow must ``travel'' and use port edges that are further than $\Rad$ from $v$.
\end{itemize}
Clearly $\flowr_v + \flowb_v + \flowt_v = 1$.

We now aim to find a lower bound on the total rental cost paid by the terminals.
Flow that rents the port edge must pay $\log n$ just for this edge, giving a cost of $\flowr_v \log n$.
Now consider the $\flowt_v$ fraction of flow that travels outside the ball $B(v)$ in the expander before using a port edge.
This flow must cross each of the cuts $C_i := \delta(B_i(v))$, for $0 \leq i \leq \Rad$.

The maximum amount of flow that can travel across cut $C_i$ for free (using the bought capacity) is $\gamma(C_i)$, and so there is a rental cost of at least $\flowt_v - \gamma(C_i)$ in crossing cut $C_i$.
Summing over all the cuts, we find that the rental cost associated with this travelling flow is at least
\[ \sum_{i =0}^{\Rad - 1} (\flowt_v - \gamma(C_i)) \geq  \Rad \flowt_v - \gammaBE(v). \]
Thus the rental cost associated with terminal $v$ is at least
\[ \log n \cdot \flowr_v + \Rad \flowt_v - \gammaBE(v). \]
Summing this over all terminals $v$, we obtain a total rental cost of at least
\begin{align*}
    \Crent &\geq \sum_{v \in V} (\log n \cdot \flowr_v + \Rad \cdot \flowt_v) - \sum_{v \in V} \gammaBE(v)\\
    &\geq R \sum_{v \in V} (\flowr_v + \flowt_v) - \sum_{v \in V} \gammaBE(v) & & \text{since $R \leq \log n$}\\
    &\geq R \sum_{v \in V} (\flowr_v + \flowt_v) - n^{1/2} \log^2 n & &  \text{by \eqref{eq:gammaBE}}.
\intertext{Finally, note that}
    \sum_{v \in V} (\flowr_v + \flowt_v) &= \sum_{v \in V} (1 - \flowb_v) \geq \sum_{v \in V} (1 - \gammaBP(v))\\
    &\geq n - n^{1/2}\log n & &  \text{by \eqref{eq:gammaBP}}.
\intertext{Thus}
\Crent &\geq R \cdot (n - n^{1/2} \log n ) - n^{1/2} \log^2 n \\
&= \BigOmega(n \log n), 
\end{align*}
since $R = \BigTheta(\log n)$. \qed
\end{proof}

\section{Single path routing vs. tree routing}
\label{sec:trees}

As discussed in the introduction, 
for any robust network design problem we have $\optT = \tilde{O}(\log{n}) \optdyn$.
We now show that this is essentially best possible by exhibiting a problem instance such that
$\optT = \BigOmega(\log{n}) \optspr$, and so also $\optT = \BigOmega(\log{n}) \optdyn$.

Consider a graph on $n$ vertices with girth (length of the shortest
cycle in the graph) $\BigOmega(\log{n})$, and with $cn$ edges, where
$c$ is a constant strictly larger than $1$. The requirement $c>1$ makes the existence of such graphs nontrivial, but they do exist. For example, Lemma~15.3.2 in \cite{Matousek} states that there exist graphs with girth $\ell$ and $\frac{1}{9} n^{1+1/(\ell-1)}$ edges.
Taking $\ell := (\log{n})/100$ gives a graph satisfying our requirements.

This graph defines the network topology for our problem instance: all the nodes are
terminals, and all edges have unit cost.  The demand polytope is
given by a single demand: there is a unit demand between terminals
connected by an edge.

Clearly, a good \SPR\ template is the network itself, and its cost is $cn$,
the number of edges.  Now, if we take any tree
template, then edges of the network that are not included in the
tree have to be routed on a path of length $\BigOmega(\log n)$ because of the girth property. There are at least $cn-(n-1) = (c-1)n+1$ such edges, and so the total cost
of the tree template is $\BigOmega(n \log n)$.

\section{Conclusions}


We have shown that oblivious routing (even splittable) can perform quite poorly compared to dynamic routing in some situations.
However, fully dynamic routing is problematic to implement.
Is it possible that some tradeoff between the two extremes of dynamic and oblivious routing could produce significantly better results while remaining practical?

Another very natural question concerns the gap between \MPR\ and \SPR\ for the single-sink robust network design problem with arbitrary demand polytopes.
We are not aware of any single-sink instances for which the gap is superconstant.
The single-sink robust design problem (computing \optspr) could still conceivably have a constant factor approximation algorithm for well-described polytopes.
This would be of interest since it generalizes a host of well-known problems such as Steiner tree, single-sink rent-or-buy, and single-sink buy-at-bulk (the last
follows from a transformation given in~\cite{OlverS10}).

\vspace{1ex}
\noindent \textbf{Acknowledgements.}
We would like to thank Gianpaolo Oriolo for some very helpful
discussions. We also thank an anonymous reviewer for ESA, for some
detailed and useful input. A final version of this paper was published in Algorithmica.

The second author was supported by a MELS Quebec Merit Scholarship for Foreign Students, and a Schulich Fellowship.
The third author is supported by a NSERC Discovery Grant.

\bibliographystyle{abbrv}    
\bibliography{vpn}

\iflong
\appendix
\section{Upper bounds on the gaps}
For the sake of completeness, in this appendix we give a proof of Gupta's observation that the gap between
$\optdyn$ and $\optspr$ is $O(\log{n})$.  We also show that the gap between $\optdyn$ and $\optT$ is
$\tilde{O}(\log{n})$, via a similar proof.  A sketch proof of Gupta's observation appears in
Chekuri~\cite{chekurisurvey07}.

We use basic notions about finite metric spaces; an excellent
exposition of this topic is found in Matousek's book~\cite{Matousek}.  We begin
with some notation and state a theorem that we need. We are given an
instance $(G, c, \mathcal{P})$ of the robust network design problem
on $n$ nodes.
The cost function $c$ induces a metric $d_G(\cdot, \cdot)$ on nodes of
$G$ in the usual way: the distance $d_G(x,y)$ between nodes $x$ and $y$ is given by the length of the shortest $x$-$y$
path in $G$, with edge $e$ having length $c(e)$.
We also define the complete graph $C_G$ on $V$ where edge $\{x,y\}$ has length $d_G(x,y)$.

Now using the result and notation of \cite{FRT04}, a metric $d$ can
be approximated by distribution over dominating tree metrics in the
following sense. A metric $(V, d')$ is said to \emph{dominate} a
metric $(V,d)$ if for all $x, y \in V$ we have $d'(x,y) \geq
d(x,y)$.  Given a probability distribution $\mathcal{D}$ over a
family of  tree metrics $\mathcal{S}$ on $V$, we say that
$(\mathcal{S}, \mathcal{D})$ $\alpha$-probabilistically approximates
a metric $(V, d)$ if every metric in $\mathcal{S}$ dominates $d$,
and for all $x,y \in V$ we have $\expect_{d' \in (\mathcal{S},
\mathcal{D})}[d'(x,y)] \leq \alpha \cdot d(x,y)$.

Building upon previous work, Fakcharoenphol et al.~\cite{FRT04}
proved that every finite metric on $n$ nodes can be
$O(\log{n})$-probabilistically approximated by a distribution over
tree metrics.

Now getting back to our robust network design instance, we find a
distribution $(\mathcal{S}, \mathcal{D})$ over tree metrics which
$O(\log{n})$-probabilistically approximates $d_G$. Trees in
$\mathcal{S}$ can be taken to be spanning trees of $C_G$ (they need
not be subtrees of $G$ though). For a capacity reservation $u$ on
the edges of $G$, its cost is $\cost_G(u) := \sum_{e \in E}c(e)
u(e)$. We can also define the cost of this reservation on a tree
metric $T$ by $\cost_T(u):= \sum_{e \in E} d_T(e) u(e)$. Let $u^*$ be
the optimum capacity reservation for the \FR\ routing model, so
$\cost_G(u^*)=\optdyn$. By the theorem of \cite{FRT04} we have by
linearity of expectation applied to $\cost_T(u^*)$:
\begin{align*}
\cost_G(u^*) \leq \expect_{d' \in (\mathcal{S}, \mathcal{D})} \cost_T(u^*) \leq O(\log{n}) \cost_G(u^*).
\end{align*}

So there exists a tree $T \in \mathcal{S}$ such that
\begin{align*}
\cost_G(u^*) \leq  \cost_T(u^*) \leq O(\log{n}) \cost_G(u^*).
\end{align*}

Let $u_T$ denote the optimal capacity vector for the robust network
design problem on the graph $T$\footnote{Computing the capacity of
an edge $e \in T$ amounts to solving a linear program over
$\mathcal{P}$ with objective $\sum_{i \in A,j \in B} D_{ij}$ where
$A,B$ are the two components of $T-e$.}; N.B. all routing models are
equivalent on a tree $T$. We also have $\cost_T(u_T) \leq
\cost_T(u^*)$. This is because the dynamic solution gives an
oblivious solution for $T$ with cost at most $\cost_T(u^*)$ as
follows. For any edge $e$ in $G$, add $u^*(e)$ units of capacity on
the path in $T$ between the endpoints of $e$. The overall capacity
$u'$ installed then costs $\cost_T(u^*)$. For any valid demand, the
dynamic solution satisfied the demand by assigning flows $f(P)$ to
paths in $G$. Moreover, $\sum_{P: e \in P}  f(P) \leq u^*(e)$ and
thus by definition, $u'$ has enough capacity to support routing all
such flow paths  $P$, between $u,v$ say, on the unique $u-v$ path in
$T$. Since $\cost_T(u_T) \leq \cost(u')$ we are done. Hence
\begin{align*}
\cost_G(u^*) \leq  \cost_T(u_T) \leq O(\log{n}) \cost_G(u^*).
\end{align*}

Let $f_T$ be the routing template  on $T$ that determines $u_T$.
This can be transferred to an \SPR\ routing template in $G$ with a
capacity reservation $u_{G(T)}$ on $G$ with the same cost as
follows: Each edge in $T$ corresponds to a path in $G$.  For edge
$xy$ in $T$ we reserve $u_T(xy)$ capacity on each edge on the path
in $G$ corresponding to edge $xy$.  If an edge in $G$ lies on
several such paths then the capacity reserved on it is the sum of
the $u_T$-values for all of these paths. Clearly the cost of the
resulting capacity reservation $u_{G(T)}$ on $G$ is the same as the
cost of $u_T$.  Also, the routing template $f_T$ for $u_T$ can be
simulated on $G$ in the natural fashion in the \SPR\ routing model.
Thus, we have a reservation $u_{G(T)}$ supporting a \SPR\ routing on $G$
and with $\cost_G(u_{G(T)}) = \cost_T(u_T) \leq
 O(\log{n}) \cost_G(u^*)$.
Noting that $\optspr \leq \cost_G(u_{G(T)})$ and $\optdyn =
\cost_G(u^*)$ completes the proof that $\optspr = O(\log{n}) \optdyn$.

Note that the above proof does not give us that $\optT = O(\log{n})
\optdyn$ because the support of $u_{G(T)}$ need not be a tree. We
can prove a slightly weaker result, namely $\optT =
\tilde{O}(\log{n})\optdyn$ by invoking a theorem of
\cite{AbrahamBN08}: For any metric $d_G$ induced by a graph $G$ on
$n$ nodes there is a distribution on the \emph{spanning trees} of
$G$ which $\tilde{O}(\log{n})$-probabilistically approximates $d_G$.
The remaining details are essentially the same as in the above,
except that this time the support of $u_{G(T)}$ is indeed a tree.
\end{document}

\fi

\end{document}